\documentclass[conference]{IEEEtran}
\pagestyle{plain}

\usepackage{tikz}
\usepackage{booktabs}
\usepackage{graphicx}
\usepackage{textcomp}
\usepackage{xcolor}
\usepackage{algorithm2e}
\usepackage{listings}
\usepackage{tabularx}
\usepackage{flushend}
\usepackage{soul}
\usepackage{wasysym}
\usepackage{multirow}
\usepackage{url}
\usepackage[T1]{fontenc}
\usepackage{microtype}
\usepackage{nowidow}
\usepackage{subcaption}
\usepackage{xspace}
\usepackage{enumitem}
\usepackage{amsthm}

\usepackage{algpseudocode}
\usepackage{pseudo}


\newcommand*{\rp}{\textit{RP}\xspace}
\newcommand*{\hsk}{\textit{HSK}\xspace}

\newtheorem{theorem}{Claim}

\newcommand{\ie}{\textit{i}.\textit{e}.}

\newcommand{\A}{$\mathcal{A}$\xspace}






\begin{document}

\title{A Security and Usability Analysis of \\ Local Attacks Against FIDO2}

\author{
    Tarun Kumar Yadav,
    Kent Seamons\\
    Brigham Young University\\
    {tarun141@byu.edu, seamons@cs.byu.edu}\\
}


\maketitle

\begin{abstract}
The FIDO2 protocol aims to strengthen or replace password authentication using public-key cryptography. 
FIDO2 has primarily focused on defending against attacks from afar by remote attackers that compromise a password or attempt to phish the user. In this paper, we explore threats from local attacks on FIDO2 that have received less attention---a browser extension compromise and attackers gaining physical access to an HSK.
Our systematic analysis of current implementations of FIDO2 reveals four underlying flaws, and we demonstrate the feasibility of seven attacks that exploit those flaws.
The flaws include (1) Lack of confidentiality/integrity of FIDO2 messages accessible to browser extensions, (2) Broken clone detection algorithm, (3) Potential for user misunderstanding from social engineering and notification/error messages, and (4) Cookie life cycle.
We build malicious browser extensions and demonstrate the attacks on ten popular web servers that use FIDO2. We also show that many browser extensions have sufficient permissions to conduct the attacks if they were compromised. A static and dynamic analysis of current browser extensions finds no evidence of the attacks in the wild. We conducted two user studies confirming that participants do not detect the attacks with current error messages, email notifications, and UX responses to the attacks.
We provide an improved clone detection algorithm and recommendations for relying parties that prevent some of the attacks. 
\end{abstract}





\section{Introduction}

Two-factor authentication (2FA) defends against account compromise due to stolen passwords and phishing attacks. 
The current state-of-the-art for 2FA on the Web is FIDO2, the FIDO (Fast Identity Online) Alliance, and the W3C's newest set of specifications supporting 2FA, multi-factor authentication (MFA), and passwordless authentication. FIDO2 provides a standard web services API (WebAuthn) on client machines to authenticate users using public-key cryptography. The API is available in all popular browsers and seeing adoption by major service providers, including Facebook, GitHub, and Gmail.

FIDO2 supports a variety of client-side authenticators, including hardware security keys ({\hsk}s) and built-in platform authenticators such as biometrics. Although this paper refers to {\hsk}s, the ideas apply to all FIDO2 authenticators.

The FIDO2 specification focuses on remote attackers and thus their threat model assumes a trusted client (browser and browser extensions). However, malicious browser extensions have a long-standing history of stealing user data, including passwords, financial information, and browsing history ~\cite{ositcom, awakesecurity, catonetworks}. With the adoption of FIDO2 protocol, it is crucial to recognize the potential risks posed by these malicious extensions. Examples of password theft by malicious extensions such as "Web Security" and "Stylish" underscore the pressing need to study these attacks for new emerging protocols such as FIDO2 and develop countermeasures to mitigate the risks. As attackers constantly evolve their tactics, it is important to proactively research potential attack vectors to stay ahead of the curve and ensure the effectiveness of FIDO2's security measures. The threat posed by these extensions cannot be overlooked and warrants continued research, mitigation strategies, and vigilance against new and emerging threats.


Our research systematically analyzed local attacks against FIDO2 from malicious browser extensions. 
During our analysis, we also discovered a weakness in the clone detection algorithm that combats attackers that gain physical access to {\hsk}s. 
As a result, we identified four fundamental flaws that attackers can exploit: (1) Lack of confidentiality/integrity of FIDO2 messages accessible to browser extensions, (2) Broken clone detection algorithm, (3) Potential for user misunderstanding from social engineering and notification/error messages, and (4) Cookie life cycle.

We describe seven attacks that exploit these flaws and implement the attacks to demonstrate their feasibility. 
Four of the attacks have not been described previously. Three of them have been described earlier as theoretical attacks, but we implement them and show they are feasible (see Table~\ref{flaw-table}).

We were surprised to discover two attacks (3 and 4) where passwordless authentication presents more risk than passwords alone in the presence of a compromised extension.
After performing the attack, an attacker can log in to an account the victim never logs into from the vulnerable browser. These attacks are significant because users may log into only a low-value account from an untrusted computer and not anticipate this putting their high-value accounts at risk if the computer is compromised. This risk did not exist in the world of passwords when a user had a different strong password on their high-value account. Surprisingly, passwordless authentication using public-key cryptography opened up a new risk.


We also determine that 47\% of browser extensions on the Chrome Web Store have sufficient permissions to execute most of the attacks.
Finally, we present an improved clone detection algorithm that an \hsk's firmware can implement and make recommendations for web servers to improve the security of FIDO2 implementations. 

\begin{figure}[htbp]
\centerline{\includegraphics[width=0.5\textwidth]{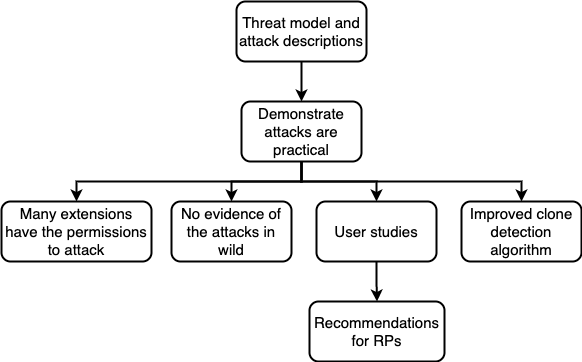}}
\caption{Paper overview}
\label{paper_overview}
\end{figure}




An overview of the paper is depicted in Figure \ref{paper_overview}. The contributions of this paper are the following:
\begin{enumerate}
 
\item 

Systematization of attacks by two local adversaries on FIDO2 security: attacks from a malicious browser extension and an attacker gaining physical access to an \hsk. Our systematization reveals seven practical attacks or weaknesses.
\item Demonstrate the feasibility of the attacks: 
\begin{itemize}
\item We prototype a malicious browser extension for Chrome and Firefox, and demonstrate the attacks on ten popular web servers that use FIDO2.
\item We analyze current Google Chrome extensions to identify (1) how many have sufficient permissions to execute the attacks if they were compromised and (2) the scale of such attacks based on the number of users for the extensions. Our results show that 105,381 out of 211,026 extensions have sufficient permissions to compromise a WebAuthn client and execute the attacks. Furthermore, 404 of these extensions have more than one million users each.
\item We confirm that the current clone detection algorithm is vulnerable to a stealthy device cloning attack
\item We perform a static and dynamic analysis of current browser extensions and find no evidence of these attacks in the wild.
\end{itemize}
\item Through two user studies (n=80 and n=20), we confirm that current error messages, email notifications, or changes in the UX caused by these attacks were insufficient for participants in the user study to detect them.
\item Present an improved clone detection algorithm.
\item Provide recommendations for web servers and browsers that could help users detect some of the attacks.







\end{enumerate}

\section{FIDO2 Background}
\label{background}
The FIDO2 protocol is a secure authentication method that utilizes public-key cryptography for user authentication to web services. In this protocol, users register their public key with the web service and prove ownership of the corresponding private key by signing challenges presented by the service.

The FIDO2 protocol involves three main entities:

The \textit{Relying Party (\rp)}: This entity represents the web application, such as "facebook.com," that supports FIDO2 authentication. The \rp communicates with the authenticator through the WebAuthn client. It can choose to offer FIDO2 as a method for two-factor authentication (2FA) or passwordless authentication.

The \textit{Authenticator}: This entity is a device that securely stores the user's private key. Users authorize an authenticator to generate a login credential to the \rp by providing some form of input, such as a PIN or a button press. FIDO2 supports various types of authenticators, including built-in fingerprint readers and external (remote) authenticators like hardware security keys (\hsk).

The \textit{WebAuthn client}: Typically present in a web browser, the WebAuthn client acts as a mediator between the authenticator and the \rp. It relays information and commands between the two entities. To prevent phishing attacks, the WebAuthn client reports the origin (URL) of the \rp to the authenticator, ensuring that the authentication process is bound to the correct website.

The FIDO2 protocol consists of two main components: the Web Authentication (WebAuthn) browser API and the client-to-authenticator protocol (CTAP). The WebAuthn API in the client provides an interface for the \rp to interact with the authenticator, while the CTAP protocol enables secure communication between the WebAuthn client and external or roaming authenticators using Bluetooth, USB, or NFC.

\subsection{Registration and authentication}
The registration and authentication processes in FIDO2 involve the following steps:

Registration: Users initiate the registration process by clicking a "register/login" button. The \rp sends a registration request to the WebAuthn client, including a challenge, user information, and \rp information. The WebAuthn client forwards this information to the \hsk, along with the \rp's origin and the request type. The \hsk prompts the user for consent, generates a new asymmetric key pair, and sends the registration data (credential ID, public key, RP ID hash, counter, and attestation signature) as an attestation object to the WebAuthn client. The WebAuthn client then forwards this data to the \rp, which verifies the signatures and critical parts of the response. Upon successful verification, the \hsk is registered to the user's account.

Authentication: The authentication process is similar to registration, with a few differences. Authentication does not require user information, and instead of attestation, the \hsk performs assertion by signing the response with the private key corresponding to the credential ID.

These mechanisms ensure secure and reliable authentication using the FIDO2 protocol, providing enhanced protection against various attacks and unauthorized access attempts.

\begin{figure*}[ht]
\centerline{\includegraphics[width=0.8\textwidth]{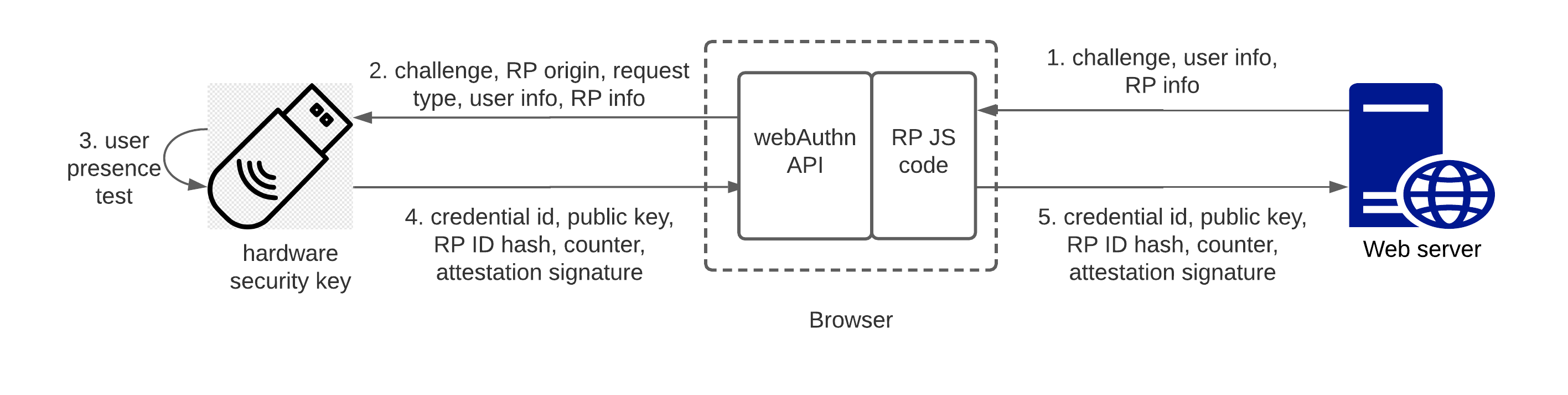}}
\caption{FIDO2 Registration}
\label{FIDO2_registration}
\end{figure*}

\subsection{Clone detection}
\label{clone_detect}

By design, an \hsk never releases the user's private key and other sensitive data. In theory, an attacker must steal an \hsk to impersonate a user. Ninjalab recently demonstrated a successful cloning of a Google Titan Security Key using side-channel attack.
The NinjaLab attack requires access to the device, 10 hours, \$12,000 in equipment, and specialized expertise \cite{lomne-titan}. 
However, there are various HSK vendors with different firmware and hardware; attackers will find other ideas to clone HSKs faster at a cheaper cost. Roth et al. ~\cite{roth2022airtag} demonstrate the cloning attack on Nordic nRF52832 by voltage glitching the nRF chip for firmware extraction with a low-cost setup (5€). 


To detect cloning, the \hsk and \rp both maintain a counter. An \hsk can have account-specific counters or a global counter. During registration, the \hsk initializes the counter on the device and sends it to the \rp, as shown in Figure~\ref{FIDO2_registration}. The \hsk increments the counter and sends it to the \rp each time it authenticates. The \rp confirms the received counter is larger than the current counter and updates the counter. A cloning attack is detected if the \rp ever receives a counter lower than the current counter. 

If an attacker clones a device and impersonates the user, the \rp increments the counter each time the attacker authenticates. When the victim eventually authenticates using the original \hsk and counter, the \rp detects the attack because the counter it receives is lower than its current counter and notifies the user about it. 




\subsection{Attestation}
\label{attestation}
During registration, an optional attestation process allows an \rp to verify the make and model of the \hsk. 
Attestation lets the \rp allow only specific devices (e.g., all employees must use a Google Titan Key) or block models with known security flaws.

Each \hsk comes with a hard-coded private attestation key shared among a group of {\hsk}s, such as $40K$ devices, to prevent user tracking. An \hsk proves its make and model by signing part of the registration response with its attestation key. 
An \rp needs access to the \hsk public attestation keys to verify attestation signatures. The \rp can access the keys on demand at the vendor or maintain a local copy.

\section{Adversary Model and Attacks}
    \label{sec:adversary_model}

This section provides an overview of the entities in FIDO2 and how they communicate. We then introduce our adversary model and describe nine attacks.
    \paragraph{Entities and communication}
    In FIDO2, there are three main entities: (1) Relying Party (RP): a Web service, (2) WebAuthn Client: relays communication between an \rp and an \hsk, and (3) \hsk: a user's hardware authenticator. To initiate the registration or authentication process with an \rp, the user communicates with the webAuthn client. The webAuthn client communicates with the \rp and the user's \hsk to register or authenticate the user. The user may be asked to touch the \hsk to confirm a request and prove that the user initiated the request. A more detailed explanation of the entities is provided in section~\ref{background}.
    
    \paragraph{Adversary model}
        Our adversary model includes two independent adversaries:
        \begin{itemize} 
            \item Adversary A1: a malicious/compromised browser extension or malicious web pages that leverage a vulnerable extension to compromise the FIDO2 webAuthn API. The adversary has access to plaintext FIDO2 communication and therefore can execute various attacks, such as MITM. 
            \item Adversary A2: An adversary that gains temporary physical access to the victim's \hsk and has cloned the device. The adversary cannot retrieve the private key or any other metadata from the \hsk, they can only clone it. 
        \end{itemize}

        
        \newcommand*{\yes}{{\tiny \CIRCLE}}
\newcommand*{\no}{{\tiny \Circle}}
\newcommand*{\mandatory}{{${+}$}}

\begin{table*}[htbp]
\caption{Relying Party Analysis}
\begin{center}
\begin{tabular}{|c|c|c|c|c|c|p{6cm}|}
\hline

& Relying & Attestation & Authentication  & Notification & Least secure & Clone detection error \\
 & Party & & before adding  & after adding & signature &  \\
 & & & additional \hsk & an \hsk & algorithm &  \\
\hline
1 & Facebook & \no & password & \no & ES256& (Page refresh) \\
2 &Github & \no & \no & email & RS256& "Security key authentication failed"\\

3 &Boxcryptor & required & password & \no & ES256 & N/A $\ddagger$ \\

4 &Dropbox & required & password & email & RS256& N/A $\ddagger$\\

5& Twitter & \no & N/A $\dagger$ & \no  & RS256 & (technical problem during testing)\\

6& Cloudflare & \no & password & \no & RS256 & "Invalid security key used. Please use a security key registered to this account." \\

7& Basecamp	& \no & \no & email & RS256 & "We couldn't verify this security key. Make sure you have registered it." \\

8& Login.gov & \no & \no & \no & RS256	 & (cloning not detected) \\

9& Shopify & \no & password & email & RS256 & "Couldn't connect to your security key. Try again." \\

10& 1Password & \no & \no & \no & ES256 & "Unable to verify your security key." \\

\hline
\end{tabular}
\end{center}

\begin{tabular}{l}

\no~none \space\space\space\space $\dagger$ Twitter supports only one \hsk on an account
\space\space\space\space
$\ddagger$ Unable to test clone detection due to required attestation\\

~~~$ES256 =$ $ECDSA$ w/ $SHA$-$256$
\space\space\space\space\space\space\space\space\space\space\space\space\space\space\space
\space\space\space\space\space\space\space\space\space\space\space\space\space\space\space\space\space
$RS256 (-256)=$ $RSASSA$-$PKCS1$-$v1\_5$ using $SHA$-$256$

\end{tabular}
\label{rp-analysis}
\end{table*}

        \paragraph{Adversary goals}
        
        Adversary A1's goal is to impersonate a victim (Bob) who uses an \hsk and gain unauthorized access to Bob's account. Ultimately, A1 wants to impersonate Bob from \textit{A1's device} over an \textit{extended period} without detection after executing a short-duration one-time attack from Bob's device or another device that Bob uses just once. Reliance on a single malicious code execution from Bob's device removes A1's continued dependence on an extension's malicious code to access Bob's accounts and, therefore, decreases the chances of detection.
        
        A1 cannot achieve its goal by registering OAuth tokens, stealing session cookies, or monitoring all user communication. Many websites do not use OAuth access tokens (e.g., banking), and the tokens last only for several hours to a couple of weeks. Furthermore, cookie-based login sessions expire as soon as a user logs out. Therefore, if the attacker wants long-term access, they have to steal cookies frequently, which is infeasible if the victim only logs in only once from a vulnerable browser.
        
        Adversary A2's goal is to gain unauthorized access to the victim's account using the cloned device \textit{without being detected}. In other words, their goal is to bypass FIDO2's clone detection algorithm.
        
\subsection{Attacks}
\label{sec:attacks}
This section provides detailed descriptions of attacks that Adversary A1 can execute on the webAuthn client API and Adversary A2 can execute with a cloned \hsk.
We take a holistic approach to explore a range of attacks in detail based on the four main flaws. The types of flaws include protocol errors, HCI challenges that impact user understanding, and implementation weaknesses. Table~\ref{flaw-table} shows the flaws, type of flaw, and attacks that exploit the flaw. We are the first to demonstrate the practical feasibility of these attacks. 

\subsubsection{Mis-binding attack during registration -- Attack 1}
\label{misbind}
During registration of the victim's \hsk, Adversary A1 replaces the victim's public key with the attacker's public key in the \hsk response. It also replaces the digital signatures with signatures generated using the attacker's private key. This attack causes the \rp to register the attacker's \hsk instead of the victim's \hsk. This attack was first identified by Hu et al, when investigating the security of the Universal Authentication Framework (UAF), a precursor to FIDO2\cite{hu2016security}.
    
An attacker can register either a software-based \hsk or a hardware-based \hsk. A software-based \hsk makes it easier for the attacker to automate the MITM attack but is not an option if the \rp requires \textit{attestation}. 
From our analysis, popular RPs like Facebook and GitHub do not require \textit{attestation} (see Table~\ref{rp-analysis}).
Attestation forces the attacker to use legitimate hardware \hsk, which increases the attacker's effort because the attacker must forward the requests and responses to a remote machine where they can connect the hardware \hsk. The attacker can use an \hsk with the same make and model as the user to be more stealthy. The attacker can automate the key tap on their \hsk by building additional hardware to perform the key tap without the user being present.~\footnote{https://bert.org/2020/10/01/pressing-yubikeys/.}
Other Robo projects, such as MattRobot\footnote{\url{https://www.mattrobot.ai}}, aim to simulate touches by employing Robo fingers capable of handling various types of touch interactions.
        
Once the attacker has registered their \hsk, they can mark {\em Remember Me} option to get the cookie which allows the victim to continue login while remaining oblivious that the attacker's \hsk was registered. If the victim logs in through a different browser or device they will get an error which is an indication that their \hsk was never registered. We answer the effectiveness of two email notifications in RQ1 in Section ~\ref{desc:RQs}.

\begin{table}[tbp]
    \caption{FIDO2 flaws exploitable by local attacks}
    \begin{center}
    \begin{tabular}{|p{4.5cm}|c|p{1cm}|}
        \hline
        \textbf{Flaws} & \textbf{Type} & \textbf{Attacks}\\
        \hline
        Lack of confidentiality/integrity of FIDO2 messages accessible to browser extensions & Protocol & 1\cite{hu2016security}, 2, 5 \\
        \hline
        Broken clone detection algorithm & Protocol & 7 \\
        \hline
        User misunderstanding from social engineering and notification/error messages & HCI & 1, 2, 3, 4~\cite{feng2021formal}$^\ddagger$ , 7 \\
        \hline
        Cookie life cycle & Implementation & 6\cite{patat2020please} \\
        \hline
    \end{tabular}
    \end{center}
    \begin{tabular}{l}
    \end{tabular}
    $\ddagger$~ = for FIDO UAF\\
    \label{flaw-table}
\end{table}

\subsubsection{Double-binding attack during registration \& authenticated session -- Attack 2}

\paragraph{During registration}
During registration of the victim's \hsk, Adversary A1 registers their malicious authenticator first to the victim's account and sends a second registration request in the background to register the victim's \hsk to the same account. 
The victim and the attacker can respond to their respective registrations if the \rp requires a touch for user presence on their \hsk. A1 directs the attacker's user presence request to the attacker. Like the mis-binding attack, the attacker can automate the test for user presence. 
Unlike the mis-binding attack, the victim can log in using their \hsk without even being aware that a second \hsk is also bound to the account.

A victim may detect one of the binding attacks (Attacks 1a, 2a, and 3a described earlier) in two ways unless A1 also compromises the notification channel.

\begin{itemize}[topsep=1pt,noitemsep,leftmargin=1.5em]
\item Some {\rp}s send a registration notification through a different channel to users, such as email. If users monitor these notifications, they can detect the attack when they see two notifications for attack 2a or an unexpected notification for attack 3a.
\item An \rp can display a list of registered {\hsk}s on the account, including the make and model. A user can verify the list regularly to detect a malicious \hsk. This passive detection depends on user awareness and effort.
\end{itemize}

We analyzed the notification process of ten popular web services that use FIDO2's WebAuthn for 2FA (see Table~\ref{rp-analysis}). Six services do not send any registration notifications. Four services send an email notification. However, all of them send an identical email when adding a new \hsk, as shown in Figure~\ref{fig:email_notification_and_clone_error_msgs_fig} (and in appendix, Figure~\ref{fig:user_confusion_errors_confusion}). 
For Attack 2a, receipt of two duplicate emails that lack any information about the identity and number of \hsk devices may not be sufficient to raise suspicion. We answer the effectiveness of two email notifications in RQ2a in Section ~\ref{desc:RQs}.


\paragraph{During authenticated session}
Adversary A1 registers a malicious \hsk to the victim's account during a logged-in session by sending a registration request and the response from their \hsk in the background without any victim's involvement.
This attack can be executed anytime during a session, providing ample opportunity to initiate an attack instead of being limited to only during registration.
Adding a malicious \hsk allows an attacker to access the account stealthily indefinitely unless the victim manually detects it through the registered \hsk page.
    
A strong defense against this attack is to require authentication using an already-registered \hsk before adding a new \hsk. Requiring the previous 2FA prevents software attacks that register their malicious \hsk in the background without user knowledge. Of the ten web services we analyzed, as shown in Table~\ref{rp-analysis}, four do not require any authentication while adding a new \hsk during a logged-in session.
Five services require only passwords to add a new \hsk even if there is already a registered \hsk, which allows software-based attacks to add a malicious \hsk. 
Twitter allows only one registered \hsk, which prevents a double-binding attack. However, limiting registration to one \hsk is not recommended as it prevents a user from registering an \hsk as a secure form of backup.

\subsubsection{Synchronized login -- Attack 3}
\label{para:sync_login}
As described in Figure~\ref{FIDO2_synchronized}, while a victim logs into a website with a registered \hsk, Attacker A1 generates a login to another victim account that expects an \hsk to authenticate. The second login is invisible to the user except that A1 coerces the user to perform the user presence test, believing it to be for the first site.

\begin{figure}[htbp]
\centerline{\includegraphics[width=0.5\textwidth]{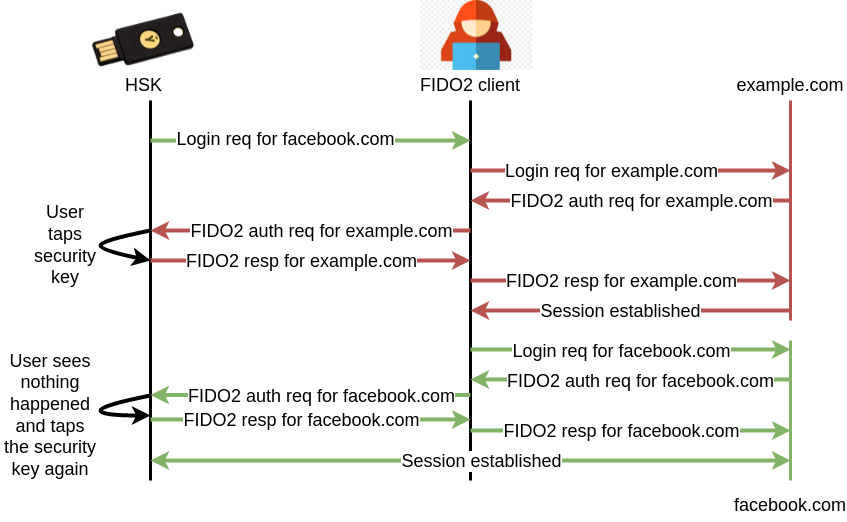}}
\caption{Synchronized login (Attack 6)}
\label{FIDO2_synchronized}
\end{figure}

To accomplish this, A1 adds an invisible iframe to the {\tt example.com} page that loads {\tt facebook.com}, resulting in an authentication request to the \hsk sent from the iframe. By default, the browser's webAuthn client does not allow cross-origin iframes to send an authentication request to an \hsk. To allow cross-origin iframe authentication, A1 also adds (1) an allow attribute to the iframe (i.e., \textit{allow="publickey-credentials-get *"}) and (2) a header in the response from facebook.com that loads the iframe (i.e., \textit{Permissions-Policy: publickey-credentials-get=*}).

There are two ways the test for user presence is handled. 
First, suppose {\tt example.com} has a registered \hsk but the user is not being prompted to authenticate using it. Many sites use a cookie for login once the user completes their first login from the device using their \hsk. The user could misinterpret the prompt for using the \hsk to login to {\tt facebook.com} in the background as a request to login to {\tt example.com} using the \hsk. 
Second, even if the login to {\tt example.com} requires authenticating with their \hsk, A1 can cause two prompts to tap the \hsk for user presence test and the user may comply believing the first tap failed to be recognized. 
Currently the browsers shows the domain of the website that a user is authenticating to, but users do not usually verify it. Furthermore, tapping \hsk twice is common due to improper touch first time. We answer these user behaviours in RQ4 in Section ~\ref{desc:RQs}.

This attack is more insidious than some other attacks because the attacker succeeds against the victim using the victim's \hsk without registering the attacker's \hsk, reducing the amount of forensic evidence for detecting the attack. 

This attack assumes that A1 has compromised the user's password to {\tt facebook.com}. 
If passwordless authentication is enabled on the target site, then A1 does not need a stolen password. 
In this case, passwordless authentication increases the risk of this attack on this account since A1 does not need to first compromise the password. 
A1 can compromise an account that the user has not accessed while A1 was present.

\subsubsection{MITM -- Attack 4}

When a victim visits a website, such as \texttt{facebook.com}, an attacker can manipulate the authentication process to execute a MITM. If the victim is already logged in, the attacker can forcibly log them out by removing the session details from the network request during communication with the website.

To carry out the attack, the attacker generates a session from their own device and initiates an authentication request to the victim's \texttt{facebook.com} account. The attacker intercepts the authentication request received by the victim and replaces it with the request they received on their device, aiming to obtain the victim's signature using their hardware security key (\hsk). The attacker then captures the response, transfers it to their device, and forwards it to \texttt{facebook.com}, enabling their device to create a login session on the victim's account. In case the user already had a session with \texttt{facebook.com}, the attacker adds the previous victim's session details to {\tt facebook.com} to allow the victim to log in using their previous session. 

In scenarios where the victim was not initially logged in to \texttt{facebook.com}, the attacker can either temporarily share their own session with the victim to avoid arousing suspicion or execute a variant of Attack 3. In this variant, the user is prompted to authenticate to \texttt{facebook.com} by tapping the hardware security key twice. One session is created for the victim, while the other is established for the attacker.



\subsubsection{Signature algorithm downgrade -- Attack 5}
A signature algorithm downgrade attack occurs during registration when adversary A1 modifies the list of signature algorithms in the registration request sent from the \rp to the WebAuthn client. In the registration request, the \rp sends a prioritized list of signature algorithms that it supports. The attacker leaves only the least secure algorithm and removes all others.
    
All ten web services in our analysis use secure signing algorithms that are not straightforward to crack. However, this attack could be used with other web services in the wild that support at least one insecure signing algorithm. Also from the ten RPs we analyzed, as shown in Table~\ref{rp-analysis}, seven support the inadvisable signing algorithm $RSASSA$-$PKCS1$-$v1\_5$ using $SHA$-$256$. An attacker can force the \hsk to use an inadvisable algorithm and possibly exploit it.
The $RSASSA$-$PKCS1$-$v1\_5$ using $SHA$-$256$ has been exploited in the past using Bleichenbacher’s attack, which allows one to perform arbitrary RSA private key operations. Given access to an oracle, and insecure exponents, the attacker can sign arbitrary messages with the \hsk private key. 
Table~\ref{rp-analysis} lists the minimum supported signing algorithms by individual RPs from our analysis.

The signature downgrade attack is more of a theoretical threat today. No RPs that we analyzed support weak signing algorithms. It is important to understand the threat exists in FIDO2 because experience shows that downgrade attacks are exploited after many years when algorithms become outdated.

\subsubsection{Cookie Lifecycle -- Attack 6}
Although 2FA strengthens security, it decreases usability when users must always authenticate using an \hsk. In response, some {\rp}s use cookies to avoid 2FA for future logins on a device if users select "Remember this device."

For example, Facebook issues a $`datr`$ cookie to denote the user has previously authenticated using an \hsk on their device. 
Future logins skip requiring the \hsk until the cookie expires.
If an attacker steals this cookie and adds it to the browser on their device, the attacker can authenticate as the user with only a stolen password and altogether bypass 2FA. 

The "cookies" permission allows a Chrome extension to use the cookies API. With API access, A1 can steal long-term session cookies or cookies that help them bypass 2FA on any device for a domain. We analyzed 246,345 Chrome extensions and discovered 31,326 Chrome extensions with "cookies" permission. 18,024 of those extensions also have "://*" permission, allowing these extensions to steal the cookies of every \rp a user uses. 102 of these extensions have more than a million users each. Figure~\ref{chrome_extensions_analysis_chart} contains the full distribution of users for these extensions.


    
To test the feasibility of a cookie stealing attack, we logged into Facebook from a Chrome browser and enabled 2FA using an \hsk. We added the current browser to the "Remember this device" list. Then, we logged out of Facebook and copied the $`datr`$ cookie to a Chrome browser on another device. We confirmed that the cookie permitted us to successfully authenticate to Facebook from the second device with just a password and bypass 2FA.

Cookie-stealing attacks are well-known. From our experiment, we learned that Facebook sets cookie expiration at approximately two years. 
Although this is motivated by usability, attackers can exploit to to grant them long-term access to 2FA-enabled accounts without stealing the hardware authenticator.

\subsubsection{Bypassing clone detection algorithm -- Attack 7}
\paragraph{Lack of informative error message}
We experimented with triggering the clone detection error messages on ten RPs. A few sites do not report any errors. Most sites display a generic error message that is not specific to device cloning (see Table~\ref{rp-analysis}). 
These generic messages may lead users to reattempt the login. Multiple reattempts could eventually lead to a successful login, depending on the number of times the attacker used the cloned device. For example, if the \hsk and \rp have a counter value of $x$, the cloned \hsk would initially have the same value. When the attacker logs in, the counter's value updates to $x+1$ at the \rp and the cloned \hsk. If users try to log in to their account, they may receive an error message, but the counter on the user's \hsk will advance to $x+1$. If the user reattempts the login, they will be granted access to their account and may not understand the failed attempt is due to a cloned \hsk. We answer the effectiveness of two email notifications in RQ3 in Section ~\ref{desc:RQs}.

\begin{figure}[bht]
        \begin{subfigure}[b]{0.24\textwidth}
                \includegraphics[trim={11cm 2cm 11cm 4cm},clip,width=\linewidth]{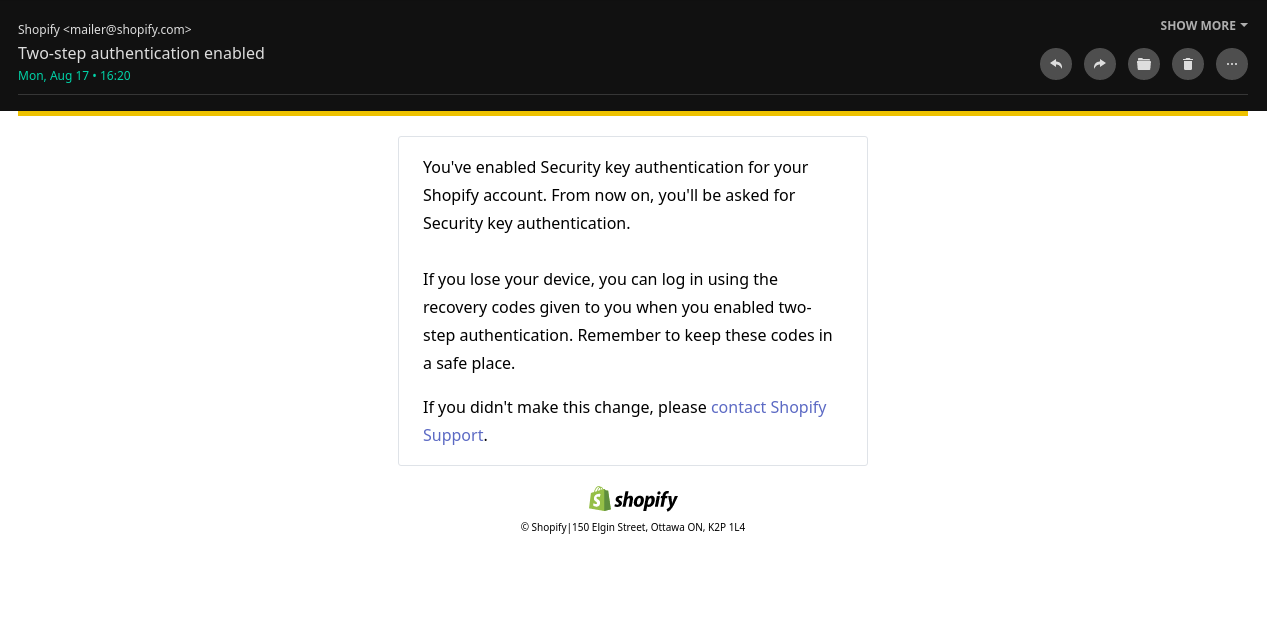}
                \label{fig:email_not_shopify}
        \end{subfigure}%
        \begin{subfigure}[b]{0.24\textwidth}
                \includegraphics[width=\linewidth]{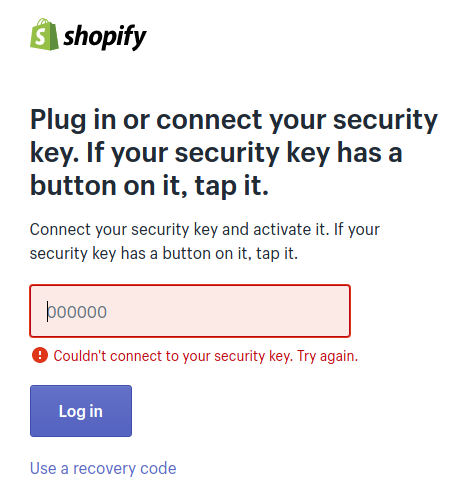}
                \label{fig:clone_error_shopify}
        \end{subfigure}%
        \caption{Email notification when adding a new HSK and clone detection error on Shopify}\label{fig:email_notification_and_clone_error_msgs_fig}
\end{figure}

Figure~\ref{fig:email_notification_and_clone_error_msgs_fig} (and Figure~\ref{fig:user_confusion_errors_confusion} in appendix) shows a screenshot of error messages from {\rp}s when the counter value submitted by the \hsk is lower than the value stored at the \rp. Even though this indicates a possible cloning attack, none of the error messages state that the user's account may be under attack. Several messages even encourage the user to switch to a less secure form of authentication.

\paragraph{Stealthy device cloning attack}
FIDO2 supports a counter to detect device cloning, as explained in Section~\ref{background}. 
Assume A2 has cloned a victim's device to gain unauthorized, undetected, short-term access to their account. The following paragraph describes a stealthy attack by A2 that avoids detection by the clone detection algorithm in FIDO2.

Assume the user's \hsk and the \rp have an account-specific counter value of $x$. When A2 gains physical access to the user's \hsk, they first clone the \hsk and then increment the counter value on the user's \hsk (by $y$) by sending it $y$ dummy authentication requests for the RP they want to compromise. The dummy requests can be created using libraries such as Yubico's libfido2. The user's \hsk counter is now $x+y$, the cloned \hsk counter is $x$, and the {\rp}'s counter is $x$. The attacker can now log in to the user's account up to $y$ times before the user logs in without detection. When the user logs in, they won't trigger an error if $x+y>x+(number Of Times Attacker Logged in)$, but the login will force an update to the RP counter to $x+y$. The attacker will not log in again without detection, but the attack provides a window of opportunity that the attacker can exploit. 
An attacker can perform a variation of this attack if the \hsk maintains a global counter.

\section{Attacks feasibility}

To demonstrate the feasibility of the attacks, we built a prototype of a malicious Chrome extension that compromises a webAuthn client and executes the eight attacks. In our Chrome extension, content scripts allowed us to obtain details and make changes to the webpages a browser visits. We replace Chrome's web API function \textit{navigator.credentials.create} with our custom handler on every webpage. Our custom handler modifies/replaces the original FIDO2 request or response with a malicious one.
        
\begin{figure}[htbp]
\centerline{\includegraphics[trim={0 0 4cm 0},clip, width=0.5\textwidth]{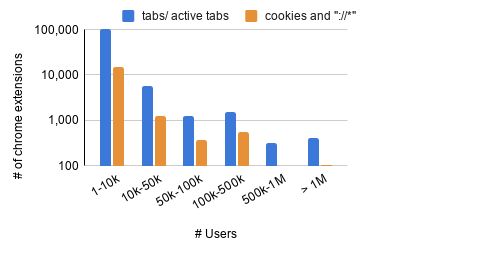}}
\caption{User distribution for Chrome extensions with permissions that allow a MITM attack against FIDO2 {\hsk}s}
\label{chrome_extensions_analysis_chart}
\end{figure}
        
Similar to Kaprevelos et at.~\cite{kapravelos2014hulk}, we analyzed the permissions requested for 246,345 Chrome extensions, extracted by CRXcavator~\cite{CRXcavator} from the Chrome webstore on Jan 21st, 2021. We found that 115,881 Chrome extensions use \textit{activeTab/tabs} permission, which is sufficient to execute a MITM attack by overriding webAuthn client APIs. There are 404 extensions with more than one million users each that can execute these attacks. A malicious actor only has to compromise one of these extensions to be in position to launch an attack on over one million users. Figure~\ref{chrome_extensions_analysis_chart} shows the distribution of users among these extensions.

Prior research demonstrates the feasibility of Attacker A2 obtaining a clone of an \hsk once they have physical access. For example, Roche et al.~\cite{roche2021side} describe how to clone a Google Titan Security Key. It requires about 10 hours to complete the cloning process. Assuming that cloning has taken place successfully, we describe how an attacker can avoid the clone detection algorithm described in the FIDO2 specification. 

The remainder of this section provides detailed descriptions of attacks that Adversary A1 can execute on the webAuthn client API and Adversary A2 can execute with a cloned \hsk.
Others have explored a few attacks against FIDO2 (see Section~\ref{related}). We take a holistic approach to explore a range of attacks in detail. We are the first to demonstrate the feasibility of these attacks.

\paragraph{Static and dynamic analysis of real chrome extensions}
We tried to find real-world attacks by analyzing extensions. We did not detect any attack happening in the wild as of now.

We analyzed extensions from the Chrome store to determine if any extensions in the wild were executing attacks on FIDO2. Specifically, we downloaded the source codes of 152,526 Chrome extensions in September of 2022 from the Chrome store over a period of 20 days to minimize additional load on the Google server. Our testing pipeline consisted of two stages: a static and dynamic analysis.

In the static analysis phase, we filtered extensions by two criteria: whether an extension had the ‘modify all data’ permission or if the phrases \texttt{navigator.get} or \texttt{navigator.create} were present in any of the source files. If either criterion was met, the extension was flagged and placed in a folder to be run through dynamic analysis.

In the dynamic analysis phase, we installed each flagged extension on a browser in a virtual machine in sequence. The virtual machine had a virtual FIDO2 authenticator that automatically approves registration and authentication requests. We ran an authentication server on the client outside the virtual machine. We performed complete FIDO2 registration and authentication flow between the webAuthn server and the virtual FIDO2 through the Chrome browser, where the browser extension under analysis was installed. For each request, we verified the Hash(ClientData) the server sent and the virtual authenticator received for registration and authentication. In response, we verified the public key and the signed object sent by the virtual key with what the server received.

To account for extensions that might trigger malicious code based on URLs, we ran complete registration and authentication flows for each extension on the top 100 Alexa domains. We added mapping of these URLs to our local server's IP address in the host file so that for all the URLs, the browser is still connected to our local server. We tested our setup on our proof-of-concept malicious extensions and were able to detect all the attacks.

Our static analysis identified 155 extensions that matched our criteria. However, none of the extensions in the dynamic analysis were detected as performing attacks on FIDO2. The extensions flagged by static analysis, particularly those using the \texttt{credentials.create} and \texttt{credentials.get} APIs, fall into two categories: password managers providing FIDO2 support and poorly optimized extensions. The first category of false positives pertains to  valid use cases such as password managers trying to implement 2FA for user's account, which were flagged for using the ‘credentials.create’ and ‘credentials.get’ APIs. The second category of false positives pertains to poorly optimized extensions. These extensions are created using webpack, which packs multiple dependencies into a single file. Sometimes parts of these dependencies are unused, and it seems they are included for an API call that never happens. One such dependency is ‘simplewebauthn’ which is depended on by other libraries but only makes calls to the ‘credentials.create’ and ‘credentials.get’ APIs not usable by extensions.
\section{User study}

Our user studies aim to assess whether the existing error messages, email notifications, and UX behavior enable users to detect various attacks. The results will show the effectiveness of the attacks under the current system design. 

We excluded the signing algorithm downgrade, cookie attack, and MITM attack from our study because they do not cause any changes to the UX. One variant of the MITM attack modifies the UX by presenting two pop-ups with the same domain name. This alteration in UX resembles a synchronization attack and possesses a higher degree of stealthiness. We can establish an upper bound for detecting the MITM attack variant by measuring users' perception of UX changes during synchronization attacks. We designed our studies to address the following four research questions, corresponding to one of the four primary attack types.




\label{desc:RQs}
\begin{description}
    \item[RQ1] Misbinding-- How do users interpret the error and their inability to log in after a misbinding attack?
    \item[RQ2] Double-binding-- 
    \begin{enumerate}[label=(\alph*)]
        \item How do users interpret when they receive two registration emails after a \hsk registration?
        \item How do users interpret the addition of a rogue \hsk they encounter on the settings page?
    \end{enumerate}
    \item[RQ3] Clone detection-- How do users interpret clone detection error messages they encounter during the login process?
    \item[RQ4] Synchronized login-- 
    \begin{enumerate}[label=(\alph*)]
        \item How do users interpret pressing the HSK button twice before logging in? 
        \item Do they detect the attack by observing the browser's popup displaying the website name?
    \end{enumerate}

\end{description}

We conducted two user studies to evaluate the detectability of attacks given the current error messages and user experience (UX). Based on our experience, we hypothesized users would likely not detect these attacks in practice. Therefore, we decided to prime the participants during the study by asking them to watch for potential attacks that would take place for some of them. The results represent the best-case scenario for detecting the attack due to the priming.

The objective of the first study, an online survey, was to assess users' comprehension of error messages, notifications, and unusual UX flow. For this investigation, we selected RQ2a, RQ3, and RQ4a, which were amenable to measurement through an online survey and necessitated minimal contextual information. RQ2a and RQ4a were also measured in the subsequent in-lab study to ascertain their validity.


The second user study was an in-lab investigation that addressed all the research questions (RQs) except RQ2a. We developed a malicious browser extension that executed the attacks to provide participants with the necessary contextual information. We then had participants register an \hsk and log in to a test account. At the same time, the extension performed the attacks, allowing us to capture participants' perceptions of the events that transpired during an attack.


\subsection{Survey study}



We conducted an IRB-approved survey exploring participants' understanding and actions when encountering attacks 2a, 3, and 7a. We asked participants open-ended questions about what they understand and their next step in the following scenarios: (1) [RQ2a] Attack 2a- receive two \hsk registration emails due to double authenticator registration, (2) [RQ3] Attack 7a- encounter clone detection error message, and (3) [RQ4a] Attack 3- requires two taps to complete authentication. 

\paragraph{Demographics} We recruited n=80 participants from Prolific. We mentioned in our study page \textit{``Please only take part in this survey if you have used a security key or hardware token such as yubikey for authentication.''} However, according prolific policies we cannot screen for surveys using its description and therefore we could not reject any participants. Out of 80 participants, 32 use \hsk for their accounts, 45 do not use \hsk for their accounts, and 3 were unsure. Table ~\ref{table:Survey_demographic} presents the survey demographics. The median time to complete the survey by participants was 6 mins, and we paid them each \$1.20 \ie \$12/hr. 

\begin{table}[t!]
\centering
\small
\begin{tabular}[t]{lr}

\textbf{Metric} & \textbf{Percent} \\
\toprule
\textbf{Gender} \\
Male & 50 \\
Female & 50 \\
\midrule
\textbf{Age} \\
18-29 years & 38.8 \\
30-39 years & 32.5 \\
40-49 years & 10\\
50-59 years & 12.5` \\
60+ years & 6.2 \\
\midrule
\textbf{Student status} \\
Student & 31.3 \\
Non Student & 66.3 \\
\midrule
\end{tabular}
\begin{tabular}[t]{lr}
\textbf{Metric} & \textbf{Percent} \\
\toprule
\textbf{Ethnicity} \\
White & 80 \\
Asian & 10 \\
Black & 6.2 \\
Mixed & 3.8 \\
\midrule
\textbf{Employment} \\
\textbf{Status} \\
Full-time & 56.3\\
Part-time & 20\\
Unemployed & 12.5\\
Unpaid work & 5\\
\bottomrule
\end{tabular}
\caption{Survey participant demographics. Percentages may not add to 100\% because we do not include “Other” or “Prefer not to answer” percentages for brevity.
}
\label{table:Survey_demographic}
\vspace{-3mm}
\end{table}

\subsubsection{Methodology}
We designed three questions corresponding to attacks 2a, 3, and 7a.

\paragraph{[RQ2a] Attack 2a}
To measure users' reaction to Attack 2a, we described a scenario where they recently registered their \hsk with Github. We provided them with credentials for a test Gmail account containing two \hsk registration emails to simulate attack 2a and other random emails such as GitHub account creation and a 2FA enrollment email. Figure~\ref{fig:github_notification} shows the content of the email they see. We tasked each participant with logging into the test Gmail account and identifying whether their Github account had any malicious activity. Then, we primed participants by telling them that half of the participants' GitHub accounts had some malicious activity. We introduced this priming to determine if even cautious users will ignore two consecutive \hsk registration email notifications arriving within seconds of each other.

\paragraph{[RQ4a] Attack 3}
To explore users' behavior during Attack 3, we described a scenario where they had used an \hsk with their work account for a long time. We told them that one day while trying to log in, they tapped their \hsk to satisfy the user presence test. It doesn't work the first time, but it works after tapping it the second time. We then asked them why this might occur and how they would respond. 

\paragraph{[RQ3] Attack 7a}
To explore user behavior during Attack 7a,  we described a scenario where they have been using an \hsk with their work account for a long time, and suddenly one day, they get an error while logging in. We showed participants GitHub's clone detection error message as shown in~\ref{fig:user_confusion_errors_confusion} and asked what they thought was the reason for the error message and what would be their next step.

\begin{figure}[htbp]
    \centering
    \begin{subfigure}[b]{0.47\textwidth}
        \centering
        \includegraphics[trim={0 3cm 0 3cm},clip, width=\textwidth]{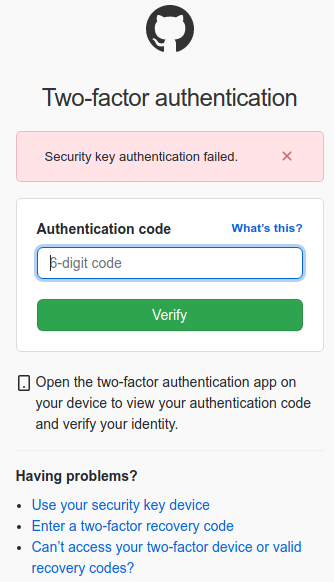}
        \caption{Github error notification for every failed login including clone detection error}
        \label{fig:github_error}
    \end{subfigure}
    \hfill
    \begin{subfigure}[b]{0.47\textwidth}
        \centering
        \includegraphics[trim={0 0 2cm 16.5cm},clip, width=\textwidth]{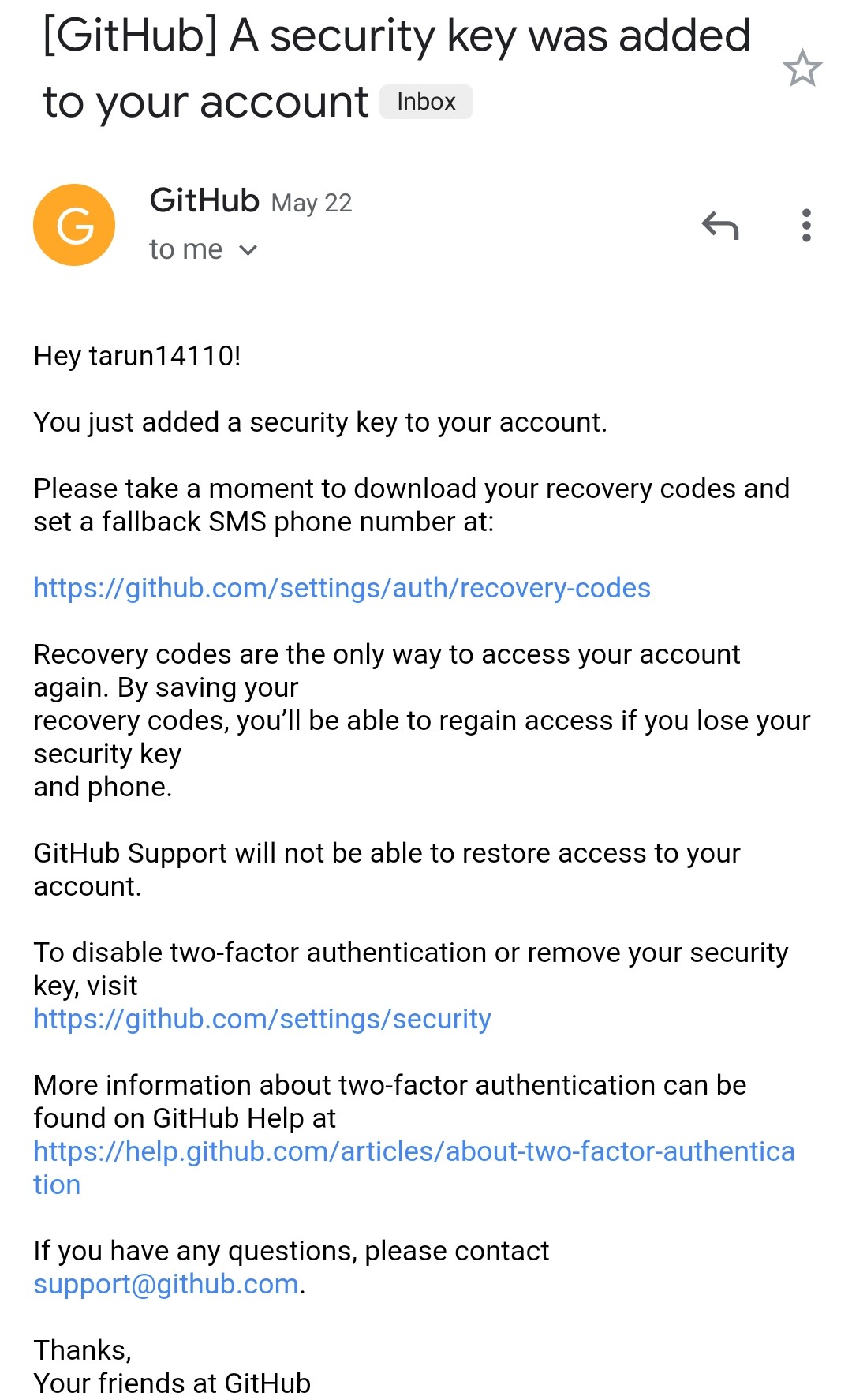}
        \caption{Email sent to users on registration of a \hsk}
        \label{fig:github_notification}
    \end{subfigure}
    \caption{Github error message and emails were shown to users for online study.}
    \label{fig:github_error_notificatsions}
\end{figure}

\subsubsection{Results}
\paragraph{Attack 2a}
None of the participants identified malicious behavior with their GitHub account, even with the priming. Five participants noticed the two \hsk registration emails but did not consider it as malicious activity. One of the reasons we believe most participants did not notice two emails is because Gmail by default hides the content of the second email if the content is the same as of the first email. Furthermore, even if somebody notice two emails two identical emails within seconds can be considered as an intential notification mechanism or a configuration error, but it is hard to imagine as an attack.

\begin{quote}
    P11: \textit{``It doesn't seem like there was any suspicious activity. There were only 3 emails and 2 of them talked about 2-step authentication processes.''}
\end{quote}

\paragraph{7a}
We ask participants what do they think the reasoning of the error. No participant considered it an attack.
The reasons for errors participants presented include incorrect key connection, wrong hardware token, USB reading issues, dust on hardware or USB slot, corrupted security key or PIN, device synchronization problems, incorrect username or password, prompt interaction delay, security token expiration, computer not updated, additional layer of authentication, and server errors. Based on these reasons users expressed they would perform following steps: unplugging and reconnecting the key, refreshing and retrying with the correct hardware token, using alternative 2FA methods, cleaning the USB port, resetting the security key PIN, re-entering username and password, restarting the computer, seeking help on Google for username/password issues, and  contacting GitHub support for prompt interaction delays.

\begin{quote}
    P23: \textit{``Could be a number of things. Could be the two-factor key got zapped or erased somehow. The USB port is not working. The authenticator key is dirty, and the contacts need to be cleaned with isopropyl. Something is corrupt in the computer. Needs rebooted.''}
\end{quote}
\begin{quote}
    P65: \textit{``Sometimes websites have temporary bugs that are beyond the user's control.''}
\end{quote}

\paragraph{Attack 3}
No participant considered that it was malicious behavior. Instead, participants mentioned that the \hsk tap is unreliable and may not work the first time. Therefore, they would tap it again. If that did not work, they would follow the same steps mentioned for the clone detection error messages.

\subsection{In-lab study}
We conducted a second IRB-approved in-person study to understand whether users detect these attacks when provided with the context they typically receive during an actual attack.

\subsubsection{Methodology}

To address the four research questions, we implemented the corresponding attacks in a browser extension. Participants were assigned the following four tasks, each corresponding to one of the research questions:

\textit{Task 1:} Participants were presented with a scenario where they needed to register an \hsk for their work \texttt{github.com} account and verify it by logging in. Our attack implementation ensured that the user experience remained unchanged during registration. Participants observed an error message during login. 

\textit{Task 2:} Participants were presented with a scenario where they needed to set up an \hsk with their \texttt{github.com} account. During the \hsk registration process, we also registered a fake \hsk with the nickname \textit{admin}. Participants were asked to navigate to the settings page to verify if the \hsk was registered correctly and suggest improvements for its security.

\textit{Task 3:} Participants were presented with a scenario where they needed to log in to their work \texttt{github.com} account, which they had been using for years. Upon the first login attempt, they encountered a clone detection error message. However, a subsequent login attempt succeeded due to an increment in the authentication counter after the failed attempt.

\textit{Task 4:} Participants were presented with a scenario where they needed to log in to their work \texttt{github.com} account, which they had been using for years. During login, our extension sent an authentication request on behalf of \texttt{chase.com} and another authentication request for \texttt{github.com}, as shown in Figure~\ref{fig:github_task4}. Users were required to tap the \hsk button twice, and before each popup, they observed the browser's popup displaying the domain they were authenticating to.

\begin{figure}[htbp]
    \centering
    \begin{subfigure}[b]{0.45\textwidth}
        \centering
        \includegraphics[width=\textwidth]{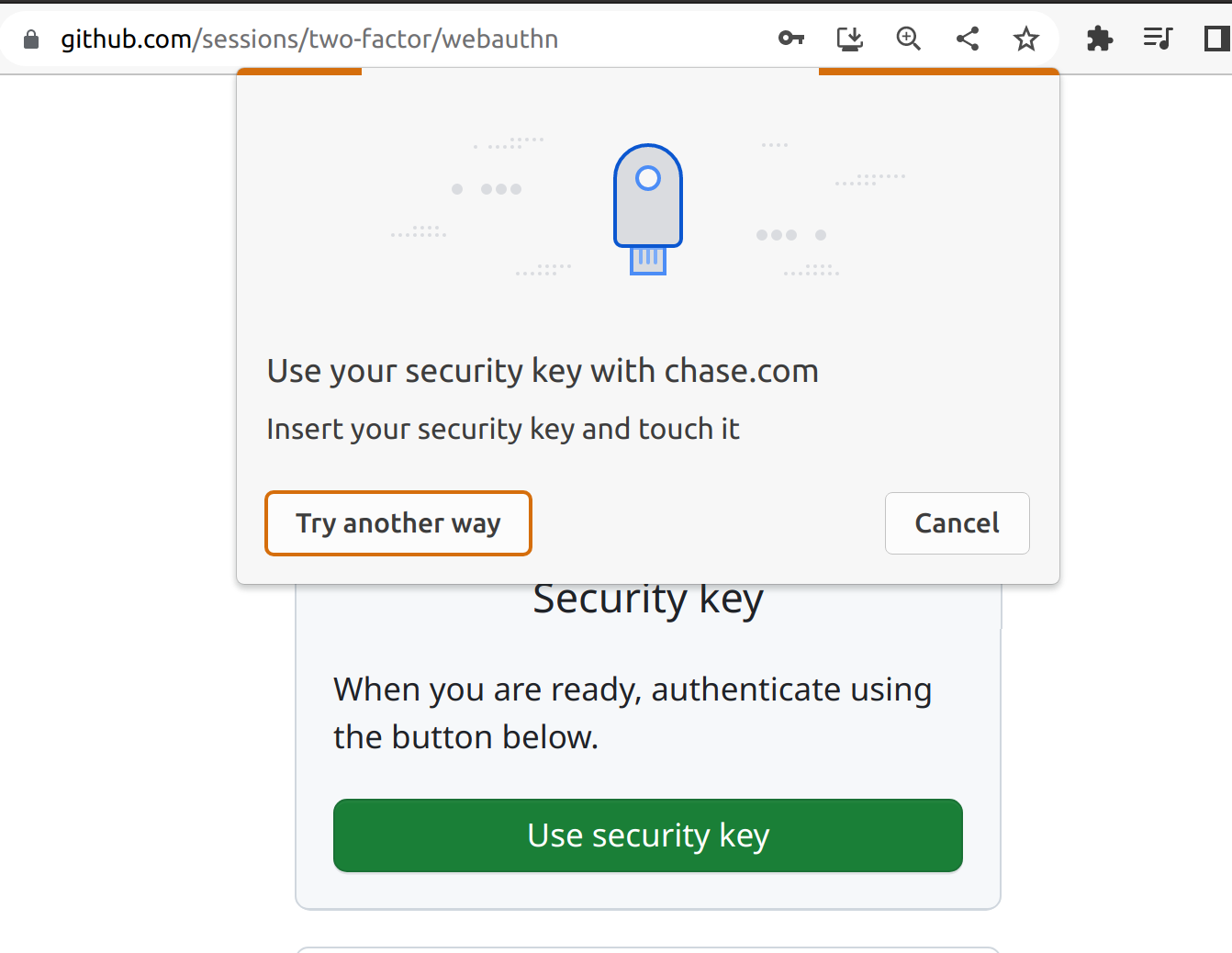}
        \caption{Chase.com authentication popup}
        \label{fig:figure1}
    \end{subfigure}
    \hfill
    \vspace{20pt}
    \begin{subfigure}[b]{0.45\textwidth}
        \centering
        \includegraphics[width=\textwidth]{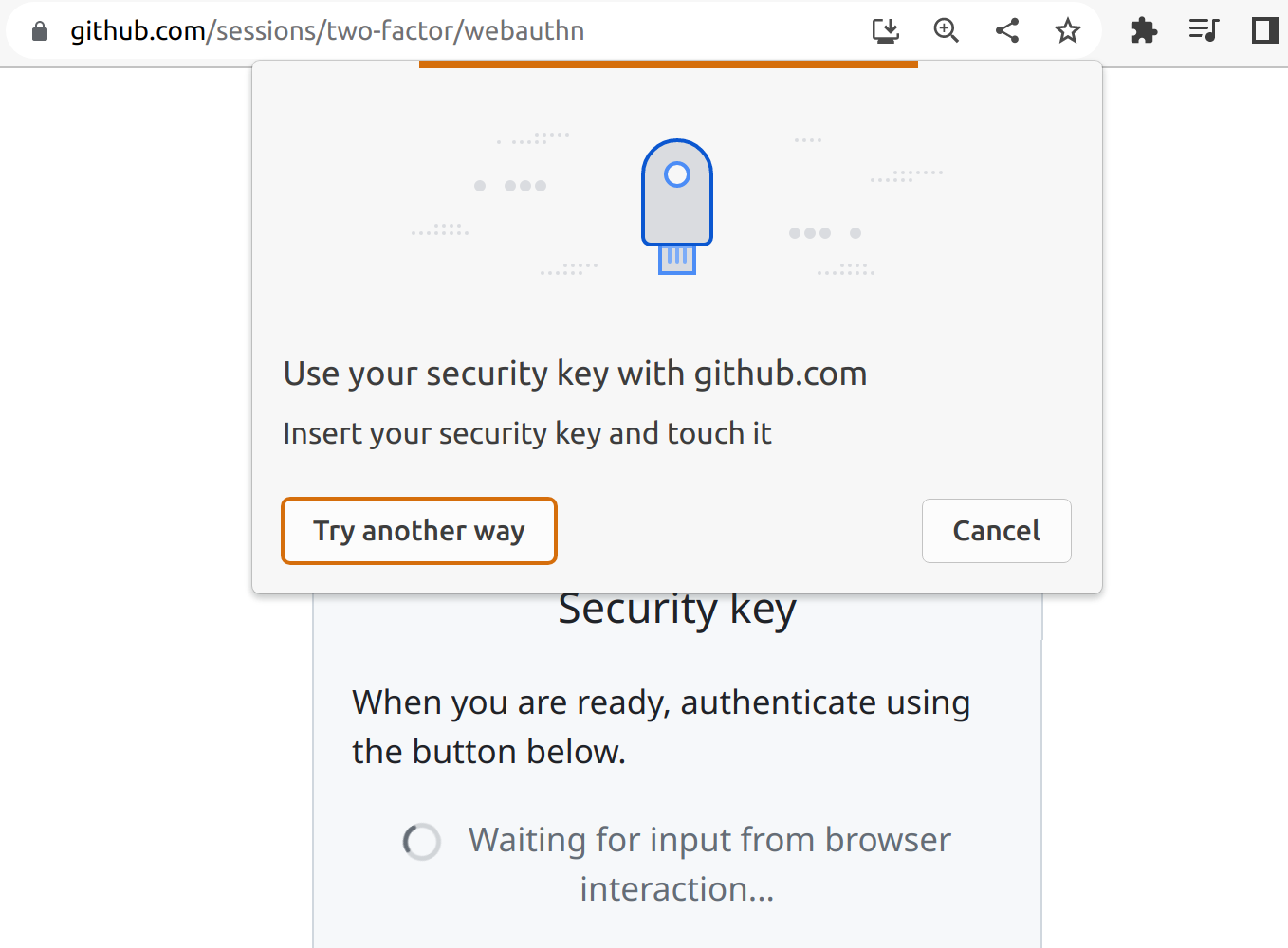}
        \caption{Github.com authentication popup}
        \label{fig:task4_githb_login}
    \end{subfigure}
    \caption{Task 4: Synchronization login of chase.com while user logs into github.com}
    \label{fig:github_task4}
\end{figure}

At the beginning of the study, participants were informed about the four tasks they would do and how Yubikeys worked. Participants were directed to a webpage where they received detailed instructions for each task as they progressed through the study. The browser extension automatically detected the current task and executed the corresponding attack. We used four separate GitHub test accounts to maintain isolation between tasks. This automated setup allowed us to minimize the study coordinator's interaction with the participants and ensured privacy. We believed that providing participants with this autonomy would result in behavior similar to their real-world actions. We instructed participants to try and resolve any encountered errors to the best of their abilities and provide their reasoning for why the errors occurred and what steps they could take to address them.

\textbf{Recruitment}
We aimed to recruit technically proficient individuals to maximize the likelihood of detecting the attacks. We recruited 20 participants by making announcements in Computer Science department classes, utilizing Slack, and distributing flyers in the CS department building. To facilitate communication, we specifically selected participants who were fluent in English. Additionally, we restricted the study to individuals who had prior experience using GitHub. Initially, we offered a compensation of \$10 for an expected 30-minute duration. However, we faced difficulties in obtaining sufficient participants. Consequently, we increased the compensation to \$20 during the study and paid this amount to all 20 participants. On average, participants took 22 minutes to complete the study.

\textbf{Demographics}
Out of 20 participants, 13 participants fell into the age range of 18-24, while 7 participants were in the age range of 25-34. In terms of gender distribution, 14 participants identified as male, while 6 participants identified as female. Additionally, 15 of the participants were undergraduate students, while 5 participants were graduate students.

\textbf{Data collection and analysis}
During each task, participants were questioned about their comprehension of encountered errors and their approach to resolving them. In addition, their screen activity was recorded to facilitate later analysis of their process. The research coordinator took notes if participants made any relevant comments related to our research questions. The collected responses were analyzed using inductive coding and content analysis techniques.

\subsubsection{Results}

\paragraph{RQ1}
For task 1, all participants successfully registered Yubikey on a GitHub account. All the participants saw the error message as shown in Figure ~\ref{fig:github_error} during login, which confirms that our attack was succesful for each of them. As expected, none of the participants realized they were logging in with a different key than the one registered with their account. Participants provided various reasons for being unable to log in, including broken keys, improper keypresses, and lack of familiarity with the key. Three participants using the Yubikey for the first time mentioned specific reasons such as improper fingerprint scan and failure to unplug the Yubikey after registration. We believe such misunderstandings would not apply to Yubikey users in general, as they would gain more experience and understanding over time.

\paragraph{RQ2b}
In task 2, participants were asked to register Yubikey on a Github account. In the background, we registered a second \hsk with the nickname \textit{admin}. Even when explicitly asked to check the settings page after login and verify the registered Yubikey and security of the account, only three out of twenty participants noticed that there were two Yubikeys registered. Of these three participants, only one mentioned that they would try to identify and remove the extra Yubikey. The other two participants did not comment on whether they would remove it, but we assume they would take some action. Therefore, we observed three successful detections of the double-binding attack.

\paragraph{RQ3}
In task 3, participants were asked to log in to GitHub using their Yubikey. The attack setup provided a cloning detection error during login on the first attempt, but upon a second attempt it successfully logged in. None of the participants associated the error with a potential cloning detection attack, as the error message did not provide any indication of such an attack. The error message was identical to other error messages, such as using the wrong Yubikey. Nine out of twenty participants attempted a second login and were able to log in successfully. All participants were unsure why the login failed the first time and attributed their inability to log in to an error in the authentication process. The rest of the participants mentioned they would log in using either SMS or another recovery mechanism and re-register the key as some issue could have happened with the registered \hsk. Despite our mentioning that they have been using the same key successfully for a long time, some participants still attributed the issue to improper Yubikey configuration on their account. P4 even acknowledged the strangeness of the situation but could not connect it to any specific attack due to the lack of informative error messages displayed by GitHub.

\begin{quote}
    P2: \textit{``It worked the second time so it might have just been an error in when I inserted the key.''}
\end{quote}

\begin{quote}
    P4: \textit{``Maybe the yubikey isn't configured correctly. I would log in with sms and check that it is configured correctly. But if I've hypothetically used it a bunch before it doesn't seem like that would be the issue. I would probably reconfigure the security key.''}
\end{quote}

\paragraph{RQ4}
In task 4, participants were required to log in to their GitHub accounts. In the background, our extension authenticated first with a \texttt{chase.com} account and then with GitHub. Participants saw two popups from the browser for each of these authentications and had to tap the Yubikey twice to complete the login process. One participant encountered a technical error during the setup and was unable to log in, resulting in a sample size of 19 for this task. Among the 19 participants, all successfully logged in to their GitHub accounts by tapping the \hsk twice. When explicitly asked if they noticed anything unusual during the authentication process, 6 participants reported no unusual observations. Thirteen participants mentioned that they had to tap the Yubikey twice, with 9 of them attributing it to either an improper key tap or a glitch in the protocol or website. Only 1 out of 13 participants noticed the first popup prompt for "chase.com" but proceeded with authentication anyway. Two participants out of the 13 repeated the task upon seeing the question "Did you notice anything unusual?" and then noticed the \texttt{chase.com} prompt, mentioning that they would not have noticed it if it hadn't been explicitly asked.

\section{Clone Detection Algorithm}
    We propose a cloning detection algorithm for account-specific counters that is not vulnerable to the stealthy cloning attack (as described in Attack 5a).
    The \hsk (see Algorithm~\ref{alg:clone-hsk} in Appendix) saves a hash of the challenge it receives during registration as $hashC$.
    Upon receipt of an authentication request, the \hsk includes $hashC$ in the response and updates $hashC$ with a hash of the authentication request challenge.


    The \rp (see Algorithm~\ref{alg:clone-RP} in Appendix) maintains a linked list of hashed challenges it sends to the \hsk in $hashList$. 
    The \rp initially creates the list and adds $hash(request.Challenge)$ to the list when registration completes (line 7). After sending the request, the \rp adds the hashed challenge to the hashList.
    Upon receipt of an authentication response,
    the \rp searches the list to determine whether it contains the hashed challenge in the response (line 10).
    If it finds no matching challenge, then a cloning attack is detected (line 11).
    If the hash is in the list, it removes items from the head of the list up to and including the matching challenge. 
    
    After cloning occurs, whichever \hsk logs in first (victim or attacker) will submit a matching challenge hash and proceed, while the other \hsk will fail its next login attempt.
    Whether the victim or attacker logs in last, the \rp  is alerted that an attack may have occurred and can take precautions.
    If the victim logs in last, they will receive a clone detection alert message informing them an attack may have occurred so they can take appropriate action.
    
    
    
    The hash-based clone detection algorithm is backward compatible with  FIDO2. It avoids false positives when messages are lost or delayed.
    Suppose an authentication request \mbox{(\rp->\hsk)} is lost. Assume that \rp and \hsk each store hashes of $challenge_{i-1}$. When the \rp sends $challenge_i$, it adds the hash to the tail of the list. If that challenge never arrives at the \hsk, the next request will contain $challenge_{i+1}$, which is added to the list. The \hsk will return a hash of $challenge_{i-1}$ in the response and store a hash of $challenge_{i+1}$. The \rp will match $challenge_{i-1}$ in the list and remove it from the head of the list along with any earlier challenges still in the list. The hashes of $challenge_i$ and $challenge_{i+1}$ are still in the list.
    
    Suppose an authentication response \mbox{(\hsk->\rp)} is lost.
    Assume that \rp and \hsk each store hashes of $challenge_{i-1}$. When the \rp sends $challenge_i$, it adds the hash to the tail of the list. The \hsk will return a hash of $challenge_{i-1}$ in the response and store a hash of $challenge_i$. If the response never arrives, the \rp will send $challenge_{i+1}$ in the next request and add the hash to the tail of the list. The \hsk will respond with $challenge_i$ in the response, and store a hash of $challenge_{i+1}$. When the \rp receives a hash of $challenge_i$ in the response, it will remove the hashes of  $challenge_{i-1}$ and  $challenge_i$ from the head of the list. 
    
    The likelihood of a false negative is negligible because the attacker authenticating with the cloned device would have to guess a correct challenge hash with a probability of 1 in $2^{256}$. 
    The following claim defines the security property of the new algorithm to prevent and detect attacks.

\begin{theorem} \label{thm:clone_detection}
Let a user $u$ register ${\hsk}_u$ on a relying party \rp for authentication. 
Assume an attacker \A has physical access to ${\hsk}_u$ from time $t$ to $t'$, where $t <= t'$. 
Let $u$ authenticate to \rp at time $t_i$, where $t_i < t'$, and at time $t_j$, where $t_j > t'$.
If \A clones u's \hsk (\ie~${\hsk}_c$) by time $t'$, then one of the following occurs:

\begin{enumerate}
    \item Both $u$ and \rp detect the clone attack during  authentication at time $t_j$.
    \item \A cannot authenticate to RP as u.
\end{enumerate}

\end{theorem}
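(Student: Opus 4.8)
The plan is to reduce the entire argument to a single \emph{one-time consumption} invariant about the value that the two devices share at the instant of cloning. First I would fix notation: let $h^\ast$ denote the value of $hashC$ stored inside ${\hsk}_u$ at time $t'$. Since cloning copies device state verbatim and ${\hsk}_c$ exists by time $t'$, the clone stores the same $h^\ast$, and because neither device has yet emitted $h^\ast$ in a response, $h^\ast$ is the hash each will return on its \emph{next} authentication. The first lemma I would prove, by induction over the legitimate authentications up to $t'$ (seeded by registration, and using the authentication at $t_i$ only to witness that the protocol has been running normally), is the \rp-side invariant: after \rp processes each response, the hash that the live \hsk will return next is present in $hashList$, and a correct response carrying it causes its removal from the list. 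This yields $h^\ast \in hashList$ at time $t'$, occurring there exactly once.

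The crux — and the step I expect to carry the weight — is the one-time consumption claim: once \rp receives any response carrying $h^\ast$, it deletes $h^\ast$ (and everything ahead of it) from $hashList$, and because challenges are fresh, $h^\ast$ is never re-inserted. Hence at most one of $\{{\hsk}_u, {\hsk}_c\}$ can ever authenticate using $h^\ast$; the device that presents $h^\ast$ second finds it absent, which \rp reports as a clone detection. I would phrase this as: the stale holder of $h^\ast$ is \emph{permanently} desynchronized from $hashList$.

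With the invariant in hand, I would split on whether \A authenticates with ${\hsk}_c$ at any point in $[t', t_j)$. If \A does, its first such authentication consumes $h^\ast$ and advances $hashList$ through a chain of attacker-chosen challenges; when $u$ authenticates at $t_j$, ${\hsk}_u$ returns $h^\ast$ (or, if \A issued dummy requests to ${\hsk}_u$ while holding it, the hash of a challenge \rp never sent), and in either case the returned hash is absent from $hashList$, so \rp detects the clone and alerts $u$ — this is conclusion (1), since $u$ logs in last. If \A never authenticates before $t_j$, then $u$'s authentication at $t_j$ legitimately consumes $h^\ast$, leaving ${\hsk}_c$ holding a stale $h^\ast$; every subsequent attacker attempt fails by the consumption claim, so \A never authenticates — conclusion (2).

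I would close by handling the stealthy variant explicitly, since it is what the algorithm is designed to defeat: issuing dummy authentication requests to ${\hsk}_u$ advances that device's $hashC$ to hashes of challenges that are \emph{not} in $hashList$, so the manipulation cannot resynchronize ${\hsk}_c$ and in fact guarantees detection at $t_j$ rather than suppressing it. Finally I would note the one residual gap — a false negative in which \A presents a hash that coincidentally matches a live list entry without holding the corresponding challenge — and bound its probability by the preimage/collision resistance of the hash ($2^{-256}$), hence negligible. The main obstacle, then, is stating and proving the \rp-side list invariant crisply enough that the one-round lag between challenge issuance and response (the \hsk always returns the \emph{previous} round's hash) does not introduce off-by-one errors in the list bookkeeping.
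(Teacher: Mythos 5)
Your proposal is correct and takes essentially the same route as the paper's proof sketch: both arguments rest on the one-time consumption of the hash value shared by ${\hsk}_u$ and ${\hsk}_c$ at cloning time, followed by a case split on whether \A authenticates before or after $u$'s login at $t_j$ (yielding conclusions (1) and (2) respectively). Your version is more rigorous — the explicit $hashList$ invariant, the handling of dummy-request manipulation of $hashC$, and the $2^{-256}$ collision bound — but these elaborate the paper's argument rather than replace it.
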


\begin{proof}[\unskip\nopunct]
\textit{Proof Sketch:} Clone detection relies on a Hashed Challenge List that the \rp maintains. The ordered list of hashes of challenges is a sliding window of challenges the \rp has sent to the \hsk for authentication. The \hsk returns the challenge saved from the most recent authentication, which an \rp will accept only once and then remove from the list.

Once the device is cloned after time $t'$, \A can authenticate either before or after $u$ authenticates at time $t_j$. Case 1 describes what happens when \A authenticates before $u$, and Case 2 describes what happens when \A authenticates after $u$. 

In Case 1, both ${\hsk}_u$ and ${\hsk}_c$ have a copy of the hashed challenge received at time $t_i$, so \A successfully authenticates to \rp as $u$ and the hashed challenge at time $t_i$ is removed from the list at the \rp. 
When $u$ authenticates at time $t_j$ and returns the hashed challenge from time $t_i$, the \rp detects the cloning attack and notifies the user.

The only way for \A to bypass detection is to somehow force $u$'s authentication at time $t_j$ to be successful. For this, \A would need to predict the last challenge \A receives before time $t_j$ and send that challenge to ${\hsk}_u$ during the cloning process between time $t$ and $t'$, which is infeasible because the challenges are generated randomly for every authentication. 

In Case 2, both ${\hsk}_u$ and ${\hsk}_c$ have a copy of the hashed challenge received at time $t_i$, so $u$ successfully authenticates to \rp at time $t_j$, and the hashed challenge at time $t_i$ is removed from the list at the \rp. 
If \A attempts to authenticate after time $t_j$ and returns the hashed challenge from time $t_i$, the \rp detects the cloning attack, and \A fails to impersonate $u$.
To impersonate $u$, \A needs to guess the challenge the \rp returns to $u$ at time $t_j$, which is infeasible since challenges are randomly generated. Therefore, in this case, \A cannot authenticate to the \rp.
\end{proof}

\section{Recommendations}

\paragraph {Include and highlight nicknames, make \&\ model in email notifications} \rp{s} should send a notification to the owner of the account after every \hsk registration, highlighting the nickname, make \&\ model of the \hsk and the total number of registered {\hsk}s. This notification may help users detect the registration of an unrecognized \hsk in Attack 2. The notification could also be effective against Attack 1 if the adversary uses an \hsk with a different make and model than the victim.

\paragraph{Require \hsk authentication before adding a second \hsk}
An authenticated user can usually register additional {\hsk}s. But an adversary can register a malicious \hsk in the background without user knowledge \textit{i.e.} Attack 3b. Therefore, an \rp should require authentication with a registered \hsk before registering an additional \hsk. However, this presents a problem when a user loses their \hsk and wants to register a new one. In this case, the user must remove the lost \hsk during an already logged-in session using less-secure authentication methods like a password. The requirement to either (1) authenticate with a previous \hsk or (2) remove a lost \hsk ensures that either the attacker cannot register their \hsk or the victim detects the removal of their \hsk on the account.


\paragraph{Provide more specific context in error messages}
To mitigate Attack 7a, device cloning error messages should explicitly state that (1) the device is cloned, (2) remove the device from the account, and (3) register another \hsk.  
The failed authentication following device cloning could happen to either the victim or \A, depending on who authenticates last following a cloning attack. Future work could analyze an \rp notifying a victim out-of-band to make them aware the attack occurred when \A authenticates last.

When a user attempts to log in with a different key than the one registered, \rp{s} should display an error that explains the problem to the user so they better understand it and can take action accordingly. Also, an informative error message could potentially help users detect misbinding attacks. Finally, a clear error message could improve usability for users with multiple \hsk{s}.

\section{Related Work}
\label{related}

Prior work has investigated the security guarantees of the first FIDO protocols---Unified Authentication Framework (UAF) and Unified 2nd Factor (U2F). UAF is a passwordless authentication system, and U2F is a standardized 2FA system.  Hu et al. performed the first informal analysis of UAF and outlined three attacks \cite{hu2016security}. They identify the mis-binding attack where an attacker binds their authenticator to a \rp instead of the user's authenticator. 

Panos et al. also analyzed UAF \cite{panos2017security}. They identified attack vectors that lead to system compromise. The threat model includes attackers with access to the authenticator and control of the UAF client.

Peirera et al. performed the first formal analysis of FIDO1 \cite{pereira2017formal}. Their threat model includes a network attacker and an attacker that compromises the client or server. They verified the security of protocol as long as the \rp is verified. The analysis investigated only authentication, not registration.

Feng et al. \cite{feng2021formal} conducted a comprehensive analysis of the FIDO UAF protocol, confirming the mis-binding attack and identifying the presence of a parallel session attack. Building on their findings, we further demonstrate the practical feasibility of the parallel session attack in FIDO2, which we call MITM - Attack 4.

{\bf FIDO2/WebAuthn} 
Eventually, UAF and U2F merged into the W3C standardized protocol, FIDO2, which supports both passwordless authentication and 2FA. Guirat et al.~ \cite{guirat2018formal} present a formal proof of the protocol, with a threat model of a passive and active network attacker. They did not consider a compromised client. 
Finally, Jacomme et al. performed a formal analysis on several multi-factor authentication schemes, including FIDO2 \cite{jacomme2018extensive}. The threat model includes malware, fingerprint spoofing, and human error. FIDO2 was shown not to be provably secure against malware.  

Other studies \cite{ciolino2019two, das2018johnny, lyastani2020fido2} have investigated the usability challenges and perceived benefits of FIDO2 device usage. Alqubaisi et al. \cite{alqubaisi2020should} evaluate the threats of password attacks and compare passwords to single-factor FIDO2. Their findings suggest that single-factor FIDO2 performs better against the password-based threat model, but they do not address targeted attacks on the FIDO2 protocol.


Chang et al. \cite{chang2017making} exposed the weaknesses of U2F to side-channel and MITM attacks and propose a modification to the U2F protocol to mitigate side-channel attacks. O'Flynn \cite{o2019min} also expands the threat model by showing an attack on {\hsk}s through Electromagnetic Fault Injection, which led to recovering secret data. Dauterman et al. consider the threat posed by hardware backdoors to {\hsk}s and propose True2F~\cite{dauterman2019true2f}, a strengthened version of U2F. True2F  protects against a malicious \hsk and increases the protection against token fingerprinting. While this work does consider a compromised browser, it asserts that if the True2F token behaves faithfully, it is no less secure than traditional U2F. Our work is complementary, and makes {\hsk}s more secure than traditional U2F assuming a compromised browser.


{\bf Remember this Device (RTD)} Patat et al. \cite{patat2020please} identify RTD vulnerabilities for U2F devices by popular websites and propose replacing cookies with a “soft U2F token." The token resists eavesdropping, but not theft by a compromised client.

\section{Conclusion}

FIDO2 has primarily focused on defending against attacks from afar by remote attackers that compromise a password or attempt to phish the user. In this paper, we explored threats from local attacks on FIDO2 that have received less attention---a browser extension compromise and attackers gaining physical access to an HSK.

We systematically analyzed local attacks on FIDO2. We demonstrated that some attacks are feasible against popular websites supporting FIDO2. Our systematization reveals four underlying flaws that lead to these attacks.

In addition, we showed that the threat to FIDO2 authentication from compromised browser extensions is real by providing data showing that many extensions have sufficient permissions to attack FIDO2. We then conducted a static and dynamic analysis of existing extensions and found no evidence that these attacks occur in the wild.

We conducted two user studies and found that participants did not detect the attacks from current error messages, email alerts, and other UX responses to the attacks.
Future work can explore ways to effectively alert users when these attacks occur.

The move to passwordless authentication may cause attackers to pursue new attack vectors, so these results can play an important role in understanding new potential weaknesses that attackers can exploit.

\paragraph{\textbf{Disclosure}}
We shared our research paper with Yubico and Firefox. Although the attacks are outside the current threat model, they appreciated receiving our results for future consideration.



\paragraph{\textbf{Availability}}

We plan to release the code for FIDO2 attack demonstration to researchers upon request.

\bibliographystyle{IEEEtransS}
{
\bibliography{References}
}

\appendix
\section{Appendix}

\begin{figure*}
    \begin{subfigure}{0.30\textwidth}
        \centering
        \includegraphics[ width=\linewidth]{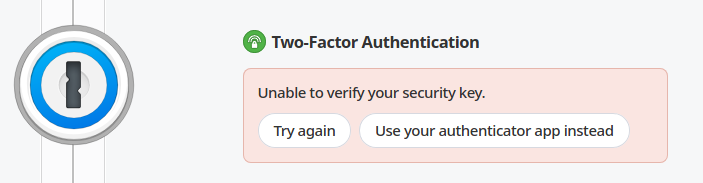}  
        \caption{1Password}
        \label{fig:sub-first1}
    \end{subfigure} %
    \begin{subfigure}{0.30\textwidth}
        \centering
        \includegraphics[trim={0 0 0 2.2cm},clip,width=\linewidth]{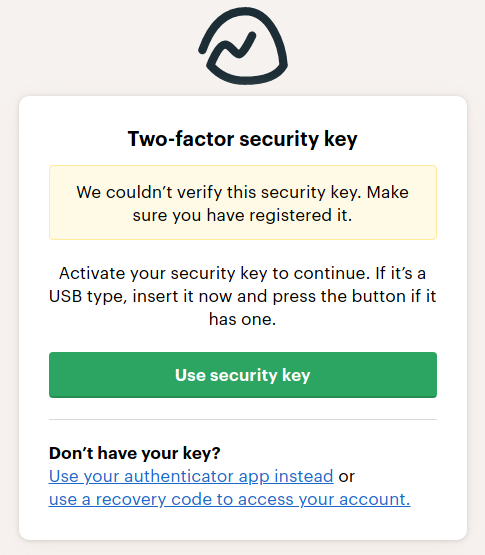} 
        \caption{Basecamp}
        \label{fig:sub-first2}
    \end{subfigure}
    \begin{subfigure}{0.30\textwidth}
        \centering
        \includegraphics[width=\linewidth]{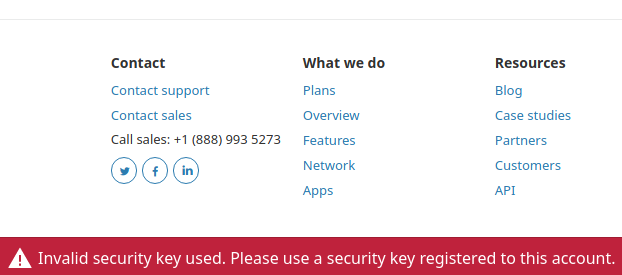}  
        \caption{Cloudflare}
        \label{fig:sub-first3}
    \end{subfigure} %
    
    \begin{subfigure}{0.30\textwidth}
        \centering
        \includegraphics[trim={0 3cm 0 3cm},clip,width=\linewidth]{images/clone_detection_errors/github.png}  
        \caption{Github}
        \label{fig:sub-first4}
    \end{subfigure} %
    \begin{subfigure}{0.30\textwidth}
        \centering
        \includegraphics[trim={7cm 7cm 7cm 3cm},clip,width=\linewidth]{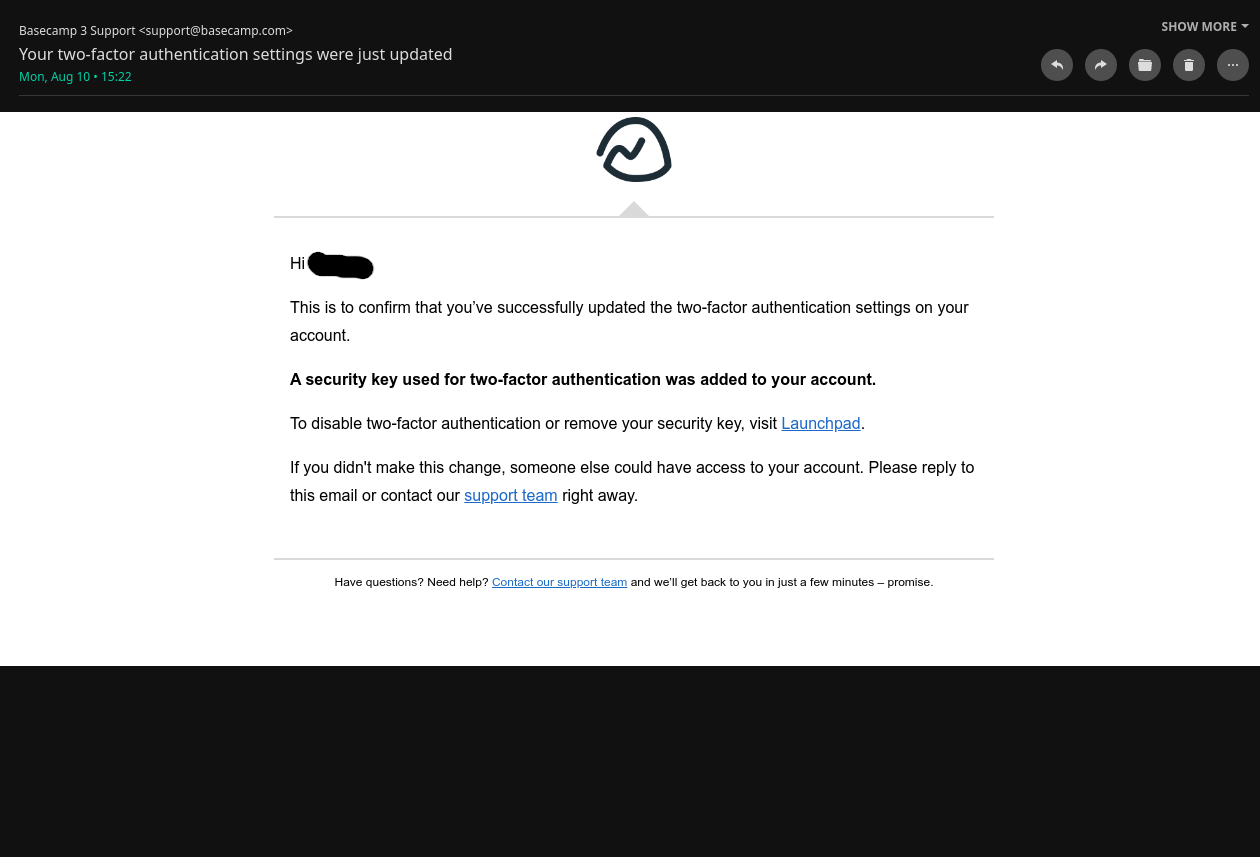}
        \caption{Basecamp}
        \label{fig:sub-first5}
    \end{subfigure}
    \begin{subfigure}{0.30\textwidth}
        \centering
        \includegraphics[trim={0 0 4cm 16.5cm},clip, width=\linewidth]{images/email_notification/github.png}  
        \caption{Github}
        \label{fig:sub-first6}
    \end{subfigure} %

    \caption{Examples of clone detection error messages (a-d) and email notifications  when adding a new \hsk (e-f)}
    \label{fig:user_confusion_errors_confusion}
\end{figure*}

\begin{algorithm}
\begin{pseudo}
updateChallengeHashHSK (req)\\+
    if $req.reg$\\+
    $hashC$ <- $hash(req.Challenge)$ \\-
    else if $req.auth$\\+
    $resp$ <- $genAuthnResp(hashC)$ \\
    $hashC$ <- $hash(req.Challenge)$
\end{pseudo}
   \caption{Update Challenge hash on the HSK}
   \label{alg:clone-hsk}
\end{algorithm}

\begin{algorithm}
\begin{pseudo}
insertCurrentChallengeHashRP(req) \\+
    $hashList.add(hash(req.challenge))$ \\-
    \\

updateChallengeHashRP(req, resp) \\+
    if $resp.reg$\\+
    $hashList$ <- $list()$ \\
    $hashList.add(hash(req.Challenge))$ \\-
    else if $resp.auth$\\+
    $hashC$ <- $resp.hashC$ \\
    if $!(hashList.contains(hashC))$\\+
        $cloning$ $detected$ \\-
    else \\+
        repeat \\+
            $head$ <- $hashList.remove()$ \\-
        until $head$ == $hashC$ \\-
\end{pseudo}
\caption{Update Challenge hash on the RP}
\label{alg:clone-RP}
\end{algorithm}




\end{document}